\documentclass[aps,twocolumn,superscriptaddress]{revtex4}
\usepackage{amsfonts,amssymb,amsmath}            
\pdfoutput=1
\usepackage[]{graphics,graphicx,epsfig}            
\usepackage{amsthm}
\def\identity{\leavevmode\hbox{\small1\kern-3.8pt\normalsize1}}

\newtheorem{lemma}{Lemma}
\newcommand{\ket}[1]{\left | #1 \right\rangle}
\newcommand{\bra}[1]{\left \langle #1 \right |}
\newcommand{\half}{\mbox{$\textstyle \frac{1}{2}$}}

\newcommand{\proj}[1]{\ket{#1}\bra{#1}}
\newcommand{\Tr}{\text{Tr}}
\renewcommand{\epsilon}{\varepsilon}
\bibliographystyle{apsrev}

\begin{document}

\title{The Role of Rotational Invariance in the Properties of Hamiltonians}
\date{\today}

\author{Alastair \surname{Kay}}
\affiliation{Centre for Quantum Computation,
             DAMTP,
             Centre for Mathematical Sciences,
             University of Cambridge,
             Wilberforce Road,
             Cambridge CB3 0WA, UK}
\affiliation{Max-Planck-Institut f\"ur Quantenoptik, Hans-Kopfermann-Str.\ 1,
D-85748 Garching, Germany}

\begin{abstract}
Is it possible to prove that the properties of Hamiltonians, such as the ground state energy, results of dynamical evolution, or thermal state expectation values, can be efficiently calculated when the Hamiltonians have physically motivated constraints such as translational or rotational invariance? We report that rotational invariance does not reduce the difficulty of finding the ground state energy of the system. Crucially, the construction it preserves the translational invariance of a Hamiltonian. The failure of the construction for the properties of thermal states at finite temperatures is discussed.
\end{abstract}

\maketitle

{\em Introduction:} In order to study the quantum systems that appear in nature, it is desirable to simulate their properties on a computer. A number of techniques, such as Density Matrix Renormalization Group and Quantum Monte Carlo simulations appear to work in a wide range of interesting cases for calculating ground state energies, dynamical evolutions and expectation values of thermal states. However, none of these schemes are guaranteed to perform efficiently. Can such guarantees be made? Recent results in the field of quantum information have allowed a classification of the difficulty of some of these problems. The dynamics of a 1D Hamiltonian can implement any quantum computation \cite{followup2}, and thus there must be instances which are hard to simulate on a classical computer (BQP$\neq$P). Finding the ground state energy of a one-dimensional Hamiltonian with nearest-neighbor interactions between 12-level systems has been shown to be QMA-complete \cite{gottesman}, which means that the result can be checked to be a solution only on a quantum computer, an even harder task (QMA$\neq$BQP). This result represents the culmination of a sequence of incremental steps including \cite{KSV02a,Kempe,oliveira-2005,Biamonte}. Under the restriction that the energy gap between the ground and excited states is constant for increasing system size, expectation values can be efficiently evaluated classically \cite{Hastings}, meaning that this restricted problem is within NP (where a solution can be verified on a classical computer).

One might ask whether other restrictions, especially those that are physically motivated, make the ground state energy, or thermal state expectation values, easier to find. Those that are of particular interest, as applied to an $N$-spin Hamiltonian $H$, are translational invariance (TI) and rotational invariance (RI),
$$
THT^\dagger=H; \quad U^{\otimes N}HU^{\dagger \otimes N}=H
$$
where $T$ is the permutation operator $T\ket{i_1i_2\ldots i_N}=\ket{i_2i_3\ldots i_Ni_1}$, each $i$ runs over the full dimension of the local Hilbert space, and the RI condition is required to hold for all single spin unitaries $U$. Perhaps such restrictions would allow us to guarantee that classical simulation is possible. For example, RI quantum states have two-body reduced density matrices which are Werner states. Given that these are a function of a single parameter, this indicates a reduction in the parameter space, thereby motivating such hopes. Nevertheless, it has been shown that the dynamics of a TI and RI Hamiltonian remain BQP-complete \cite{karl,Kay:08} and, similarly, QMA-completeness can be retained in a translationally invariant Hamiltonian if the interactions are allowed to extend over a logarithmic number of spins \cite{Kay:07} or non-locally \cite{karl}. However, for estimating the ground state energy of a local TI Hamiltonian, complexity theoretic arguments are not applicable in the same way as the non-TI case. The discussion of the complexity of a function $f(x)$ is based on how difficult it is to evaluate $f(x)$ as the size of the input increases. In particular, if the time required is upper bounded by a polynomial of $\lceil\log x\rceil=N$, then the problem is said to be in P i.e.~it requires polynomial time. Roughly speaking, in the case of ground state problems, the nature of the Hamiltonian terms replaces the function input, and the size of that input is related to the number of qubits $N$ (for example, for a nearest-neighbor Hamiltonian, one should specify a term for each neighboring pair of qubits). However, in the case of a TI system, there is no parameter that scales, except for $N$ (which is of size $\log N$, so solutions that are polynomial in $N$ are in the class EXP). Instead, for TI problems, we are interested in how long the algorithm takes to run as a function of $N$. It has recently been proven that this running time can only be polynomial if two complexity classes, BQEXP and QMA$_{\text{EXP}}$ are equal \cite{irani}. This was achieved for a nearest-neighbor Hamiltonian with a 1D geometry, although the local Hilbert space dimension of the spins is extremely large.

In this paper, we explore the influence of RI on ground state and thermal state problems, and show that for ground state properties, the case of translational and rotational invariance (TRI) is of equal difficulty to that of TI, i.e.~rotational invariance does not reduce the complexity of ground state calculations.

{\em Decoherence-Free Subsystems:} Central to any discussion of rotational invariance is the subject of decoherence-free subspaces and subsystems \cite{Zan97,Zan01b,Kni00,BRS04a}, which allow one to specify how to encode information, either quantum or classical, against globally applied noise of the form $U^{\otimes N}$, for some allowed set of unitaries $U$. An excellent review on the topic can be found in \cite{rob:revmod}. For concreteness, consider encoding information on qubits, taking $U$ to be any $SU(2)$ rotation; similar constructions can be achieved for $d$-dimensional systems subject to any $SU(n\leq d)$ rotational symmetry, or discrete subset thereof \cite{Byr06,Enk06}.

The $2^r$-dimensional Hilbert space of $r$ qubits can be decomposed into the form
$$
\left(\mathcal{H}_{\frac{1}{2}}\right)^{\otimes r}=\bigoplus_{j=0}^{r/2}{\mathcal M}_j\otimes{\mathcal N}_j,
$$
where the sizes of the subsystems are $|{\mathcal M}_j|=2j+1$ and $|{\mathcal N}_j|=c^r_j$ for even $r$, and $|{\mathcal N}_j|=c^{r-1}_{j-\frac{1}{2}}+c^{r-1}_{j+\frac{1}{2}}$ for odd $r$ (note that $j$ is an integer for even $r$, and a half-integer for odd $r$). The function $c^r_j$ is defined by
$$
c^r_j=\binom{r}{r/2-j}\frac{2j+1}{r/2+j+1}.
$$
The subsystems ${\mathcal N}_j$ are referred to as `decoherence-free' i.e.~they are not affected by collective rotations. Classical data can be safely encoded across all such subsystems. 
Since coherence is not preserved in superpositions across different subsystems, quantum information must be encoded in a single subsystem. In the limit of large $r$, the largest of these is for $j=\sqrt{r}/2$, with size $2^{r-\log_2r}$. $P^{r,j}$ will be used to denote a projector of $r$ qubits onto the ${\mathcal N}_j$ subsystem, reserving subscripts for denoting which qubits the projector is to be applied to.

For any $m>r/2+j$ qubits that are a subset of $r$ qubits
\begin{equation}
P^{r,j}P^{m,m/2}=0, \label{eqn:angmom1}
\end{equation}
This comes from the addition of angular momenta -- the block of $m$ spins has total angular momentum $m/2$, and the remaining spins cannot have more than $(r-m)/2$ angular momentum. It is impossible to get a total angular momentum smaller than the difference, which is necessarily greater than $j$ if $m>r/2+j$, and hence the two projectors have 0 overlap. By a similar argument,
\begin{equation}
P^{m,m/2}P^{2,0}=0. \label{eqn:angmom2}
\end{equation}
The decompositions for the 3- and 4-qubit cases are given in \cite{Byr06}, and one can readily verify, for example, that $P^{4,0}P^{3,3/2}=0$ or $P^{4,2}P^{2,0}=0$.

{\em Hamiltonian Properties:} Given the ability to encode information in decoherence-free subsystems, which have dimensions approaching those of entire quantum systems, it is relatively simple to convert Hamiltonians into RI Hamiltonians. Evidently, the key is to preserve the important features of the initial Hamiltonian such as which state is the ground state, the energy gap, locality of the Hamiltonian, other symmetries etc.
\begin{lemma}
A Hamiltonian $H_1=\sum_ih_i$ of $N$ spins of local Hilbert space dimension $d$ interacts with $k$-body terms which are local on a 1D lattice, where $\|h_i\|\leq 1$, $d$ and $k$ are $O(1)$, and the gap between the ground state and the first excited state is $\Delta$. There exists another Hamiltonian $H_2=\sum_ih_i'$ of $O(N)$ qubits interacting with $k'$-body local interactions and gap $\Delta$ that has the same ground state energy and is RI, with $k',|h_i'|\sim O(1)$.
\label{lemma:gs}
\end{lemma}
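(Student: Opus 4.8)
\emph{Proof strategy.} The plan is to encode each of the $N$ qudits of $H_1$ into a constant-size block of qubits, using a decoherence-free subsystem so that the transported Hamiltonian is automatically rotationally invariant, and then to add RI penalty terms that pin the low-energy space to the code. First fix a block size $r$ for which some ${\mathcal N}_j$ of $r$ qubits has dimension at least $d$; since the largest such subsystem has size $\sim 2^{r-\log_2 r}$ and $d=O(1)$, the choice $r=O(\log d)=O(1)$ suffices (indeed even the singlet subspace $j=0$, of dimension $C_{r/2}$, would do). Assign to qudit $\ell$ its own block of $r$ consecutive qubits and the logical subspace $\mathcal{L}_\ell:={\mathcal M}_j\otimes V$, where $V\subseteq{\mathcal N}_j$ is a fixed $d$-dimensional subspace. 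This uses $Nr=O(N)$ qubits on a line. A term $h_i$ on $k$ consecutive qudits becomes $h_i'$ on the corresponding $kr$ consecutive qubits: on $\bigotimes_\ell\mathcal{L}_\ell$ it acts as $\identity_{\mathcal M}\otimes h_i$ (trivially on the ${\mathcal M}_j$ factors, as $h_i$ on the encoded $V\cong\mathbb{C}^d$ factors) and it vanishes on the orthogonal complement; and for every block $\ell$ we add a penalty $E\big(\identity-P_{\mathcal{L}_\ell}\big)$ with $E=O(1)$ fixed below. All terms are then $k'=\max(kr,r)=O(1)$-local with norm $O(1)$.

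Rotational invariance of $H_2$ follows from two facts. First, $\mathcal{L}_\ell$ is a $U^{\otimes r}$-invariant subspace (a sum of isotypic components; for $j=0$, a subspace on which $U^{\otimes r}$ acts as $\identity$), so $P_{\mathcal{L}_\ell}$ commutes with $U^{\otimes r}$ and the penalties are RI. Second, on $\bigotimes_\ell\mathcal{L}_\ell$ the rotation $U^{\otimes kr}$ acts as $\bigotimes_\ell\big(D^j(U)\otimes\identity\big)$, i.e.\ trivially on the ${\mathcal N}_j$ (hence on the encoded) factors, so by Schur's lemma $\identity_{\mathcal M}\otimes h_i$ commutes with it; as $h_i'$ also vanishes on the invariant complement, $h_i'$ is RI. The angular-momentum identities \eqref{eqn:angmom1}--\eqref{eqn:angmom2} are the tool for describing the complement of $\bigotimes_\ell\mathcal{L}_\ell$ explicitly and, in the variant that must additionally keep the translational invariance of $H_1$, for trading the block-dependent projectors $P_{\mathcal{L}_\ell}$ for a single repeated interaction that penalises the ``illegal'' (high-angular-momentum) configurations.

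For the spectrum, the key point is that every $P_{\mathcal{L}_\ell}$ commutes with every $h_i'$: if block $\ell$ lies outside the support of $h_i$ this is immediate, and if it lies inside then $h_i'=\Pi_i(\,\cdot\,)\Pi_i$ with $\Pi_i$ a product of logical-subspace projectors including $P_{\mathcal{L}_\ell}$, so $P_{\mathcal{L}_\ell}$ acts as $\identity$ throughout the support of $h_i'$. Since the penalties also commute among themselves (disjoint supports), $H_2$ decomposes blockwise according to the set $B\subseteq\{1,\dots,N\}$ of ``non-logical'' blocks. On $B=\emptyset$, $H_2$ equals $\identity_{\bigotimes_\ell{\mathcal M}_j}\otimes H_1$ under the encoding, so it has ground-state energy $E_0(H_1)$ and gap $\Delta$. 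On a sector with $|B|\geq1$, every $h_i'$ whose support meets $B$ vanishes, leaving the encoding of $H_1$ with $O(k|B|)=O(|B|)$ terms of norm $\le1$ deleted, while the penalty contributes $E|B|$; the energy there is therefore at least $E_0(H_1)-O(|B|)+E|B|$. Taking $E$ to be an $O(1)$ constant exceeding $\Delta$ plus the implied constant pushes every $B\neq\emptyset$ sector strictly above $E_0(H_1)+\Delta$, leaving $H_2$ with ground-state energy $E_0(H_1)$ and gap exactly $\Delta$.

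The crux -- and the main obstacle to a naive argument -- is exactly this confinement step: one wants it to work with $E=O(1)$, whereas a generic penalty argument would need $E=\Omega(\|H_1\|)$. What rescues it is that the penalties commute with all interaction terms, so that declaring $|B|$ blocks non-logical deletes only $O(|B|)$ interaction terms while adding $E|B|$ of penalty. The rest is routine: confirming $r=O(\log d)$ is enough, checking that \eqref{eqn:angmom1}--\eqref{eqn:angmom2} give the claimed orthogonalities, and, for the translationally invariant refinement, verifying that the single repeated penalty still singles out $B=\emptyset$ as the unique low-energy sector.
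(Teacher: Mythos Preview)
Your proposal is correct and follows essentially the same route as the paper: encode each $d$-level site into a decoherence-free subsystem of $r=O(1)$ qubits, lift each $h_i$ to act as $\identity_{\mathcal{M}_j}\otimes h_i$ on the code and vanish outside, and add an $O(1)$ per-block penalty (the paper takes $J=2k\max_i\|h_i\|$) so that a single block leaving the code costs more than the $O(k)$ interaction energy it can recoup. Your commuting-penalties, sector-by-$B$ analysis is simply a more explicit rendering of the paper's one-line remark that removing one spin can lower the energy by at most $k\max_i\|h_i\|$.
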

This problem does not, yet, incorporate an assumption of TI. With no restrictions on the energy gap $\Delta$, it is known that the ground state problem is QMA-complete for $k\geq 2$ \cite{KSV02a}, and this also holds when the lattice is restricted to a spatial geometry, such as 1D \cite{gottesman,Kay:08}. On the other hand, if $\Delta$ is a constant, then Hastings has proved that the problem is within NP \cite{Hastings}. The following proof is only restricted to the 1D case for pedagogical reasons, but the lemma applies to arbitrary geometries.
\begin{proof}
We transform any instance of a Hamiltonian $H_1$ into a RI version $H_2$ with the same energy gap $\Delta$ by finding a decoherence-free subsystem of $r$ qubits of size $d_r\geq d$. We now assume that $H_1$ was composed of spins of dimension $2^r$ rather than $d$. To ensure that the ground state and energy gap are unaffected by this mapping, we apply a local magnetic field $J\sum_{i=d}^{2^r-1}\proj{i}$, $J=2k\max_i|h_i|$ to every site to shift all other states to a higher energy above the ground state. Having done this, one can trivially map the levels 0 to $d_r-1$ of the spin to the $d_r$ levels in the decoherence-free subsystem. Hence, $H_2$ acts on $rN$ qubits with $k'=rk$-body terms which are local on the same lattice, and the energy gap is the same. Our reason for selecting $J$ in such a way is that were we to take the original ground state and remove one of the spins, the total energy cannot be more than $k\max_i|h_i|$ below the ground state energy, so replacing it with a state from one of the other subspaces must increase the energy over the ground state by at least $k\max_i|h_i|\geq\Delta$.
\end{proof}
It is important to observe that by encoding $d$-dimensional spins in the decoherence-free subsystems of qubits, expanding the number of spins we interact with ($k'>k$), we lose the ability to reassert nearest-neighbour interactions by recoding into a larger local Hilbert space dimension -- this would break the rotational invariance.

For example, the difficulty of finding the ground state energy of a RI Hamiltonian with 16-body interactions is the same as that of a non-RI Hamiltonian with 12-level systems and nearest-neighbor interactions, i.e.~QMA-complete \cite{gottesman}. Given that we are interested in classical simulation, we must ensure that the encoding process of $H_1\rightarrow H_2$ can be efficiently simulated. This is guaranteed because the encoding process is the parallel application of a quantum encoding algorithm \cite{Bac06a,Bac06b} on fixed sized blocks, and a quantum computation on a fixed sized block can be efficiently simulated on a classical computer.
\begin{lemma}
There exists a local encoding process $\mathcal{E}$, and decoding process, such that
$$
e^{-iH_1t}\ket{\psi}=\mathcal{E}^{-1}\left(e^{-iH_2t}\mathcal{E}(\ket{\psi})\right)
$$
for all times $t$ and initial states $\ket{\psi}$, i.e.~simulation of the evolution of $H_2$ is at least as hard as the simulation of the evolution due to $H_1$.
\label{lemma:dynamics}
\end{lemma}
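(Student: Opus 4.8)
The plan is to reuse the encoding of Lemma~\ref{lemma:gs} but to discard its energy-penalty step, which is useless for dynamics: a simulation of the evolution must keep the state \emph{exactly} inside the encoded subspace, not merely energetically favour it. As in that proof, fix $r=O(1)$ so that some decoherence-free subsystem $\mathcal{N}_j$ of $r$ qubits satisfies $d_r\geq d$, and choose an isometry $W:\mathbb{C}^d\to\mathcal{N}_j$ together with a fixed reference vector $\ket{m_0}\in\mathcal{M}_j$. The single-block encoder is $\ket\phi\mapsto\ket{m_0}\otimes W\ket\phi\in\mathcal{M}_j\otimes\mathcal{N}_j\subset(\mathbb{C}^2)^{\otimes r}$, and $\mathcal{E}$ is the tensor product of $N$ copies of it, with the $r$ physical qubits of logical spin $i$ kept contiguous on the lattice. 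I would then \emph{define} $H_2=\sum_i h_i'$, where $h_i'$ is the rotationally invariant extension of $h_i$: on the $kr$ qubits carrying the $k$ logical spins that $h_i$ couples, $h_i'$ acts as $I_{\mathcal{M}}^{\otimes k}\otimes\bigl(W^{\otimes k}h_i W^{\dagger\otimes k}\bigr)$ on the sector $\mathcal{M}_j^{\otimes k}\otimes\mathcal{N}_j^{\otimes k}$ in which every block carries angular momentum $j$, as $0$ on every other sector, and as the identity on the remaining $r(N-k)$ qubits.

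Checking the three requirements is then straightforward. Rotational invariance: on a block of $r$ qubits a collective rotation is $U^{\otimes r}=\bigoplus_{j'}D^{j'}(U)\otimes I_{\mathcal{N}_{j'}}$, acting only on the $\mathcal{M}$ factors, whereas each $h_i'$ acts only on the $\mathcal{N}$ factors within each sector, so $[h_i',U^{\otimes rN}]=0$ for every $SU(2)$ rotation $U$ and $H_2$ is RI; moreover, since the block encoder is the same everywhere, $\mathcal{E}$ is translationally covariant with period $r$, so a TI $H_1$ yields a TI $H_2$. Locality and norm: $h_i'$ is supported on the $k'=kr=O(1)$ contiguous physical qubits of its $k$ logical spins, it is Hermitian, and $\|h_i'\|=\|h_i\|\leq1$. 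Efficiency of $\mathcal{E}$ and its decoder: both are the parallel application of a single fixed-size ($r$-qubit) circuit, namely the standard decoherence-free-subsystem encoder/decoder of \cite{Bac06a,Bac06b}, and a quantum circuit on $O(1)$ qubits is classically simulable in $O(1)$ time, so the transformation costs $O(N)$.

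It remains to establish the intertwining relation. Because $\ket{m_0}\in\mathcal{M}_j$ and every $h_i'$ is the identity on the $\mathcal{M}$ factors, and because $W^{\otimes k}h_iW^{\dagger\otimes k}$ preserves the image of $W^{\otimes k}$, a term-by-term check gives $h_i'\,\mathcal{E}\ket\psi=\mathcal{E}\,h_i\ket\psi$ for all $\ket\psi$; summing, $H_2\mathcal{E}=\mathcal{E}H_1$ on the image of $\mathcal{E}$, so that image is $H_2$-invariant and $e^{-iH_2t}\mathcal{E}\ket\psi=\mathcal{E}\,e^{-iH_1t}\ket\psi$ for all $t$. Since $\mathcal{E}$ is an isometry, $\mathcal{E}^{-1}=\mathcal{E}^\dagger$ is well defined on its image, and applying it yields $\mathcal{E}^{-1}\bigl(e^{-iH_2t}\mathcal{E}\ket\psi\bigr)=e^{-iH_1t}\ket\psi$, which is the claim; the decoding is likewise the parallel application of a fixed-size circuit.

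The crux — and the only real departure from Lemma~\ref{lemma:gs} — is this last invariance requirement. One cannot simply take $h_i'=\mathcal{E}h_i\mathcal{E}^\dagger$, since the code states carry a nontrivial $\mathcal{M}_j$ angular-momentum charge and their naive conjugation is not rotationally invariant, while conversely a genuinely RI operator must act trivially on the $\mathcal{M}$ factors. The construction is thus forced to split the logical data ($\mathcal{N}_j$, acted on) from the gauge data ($\mathcal{M}_j$, inert), and one must check that the chosen reference vector $\ket{m_0}$ is a common fixed point of all the $h_i'$, for otherwise distinct interaction terms would rotate the gauge factor incompatibly and $\mathcal{E}^{-1}$ would fail to recover $e^{-iH_1t}\ket\psi$. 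Once that structural point is settled, the rest is bookkeeping.
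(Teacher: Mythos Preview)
Your proof is correct and follows essentially the same route as the paper: use the block-wise decoherence-free-subsystem encoding of Lemma~\ref{lemma:gs} and observe that $H_2$ restricted to the encoded subspace acts as $H_1$, so that $e^{-iH_2t}$ and $\mathcal{E}$ intertwine. The paper's own argument is a two-line appeal to Lemma~\ref{lemma:gs} and keeps the energy-penalty term (which is harmless since it vanishes on the encoded subspace), whereas you drop it and spell out explicitly why $h_i'$ must act as the identity on the $\mathcal{M}_j$ factors to be rotationally invariant---a point the paper leaves implicit---but this is elaboration rather than a different strategy.
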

\begin{proof}
In the proof of Lemma \ref{lemma:gs}, we constructed an encoding procedure $\mathcal{E}$ on single spins that transformed the entire Hamiltonian $H_1$ into a subspace of $H_2$. Thus, the action of $H_2$ on any state confined within this subspace will be identical. One such example is the application of the same encoding procedure to any input state $\ket{\psi}$.
\end{proof}

One might hope that the thermal state of $H_2$ at inverse temperature $\beta$ would accurately reproduce the properties of $H_1$ at the same temperature, up to encoding and decoding processes; we have designed $H_2$ so that it can be written as a direct sum of $H_1$ and other terms, all of which suffer energy penalties due to the added magnetic field. This additional energy should correspond to a suppression of their importance in the thermal state, and hence not significantly affect the values of any observables. Unfortunately, the vast number of these levels outweighs any potential saving, as we will now show. Consider an observable and its encoded counterpart
$$
\langle \hat O_1\rangle=\frac{\Tr(\hat O_1e^{-\beta H_1})}{\Tr(e^{-\beta H_1})};\quad
\langle \hat O_2\rangle=\frac{\Tr(\mathcal{E}(\hat O_1)e^{-\beta H_2})}{\Tr(e^{-\beta H_2})}.
$$
We might expect some freedom in the way we define $\mathcal{E}(O_1)$. However, if we are to avoid obscuring the original observable, we must specify that $\mathcal{E}(O_1)\ket{\psi}=0$ for any $\ket{\psi}$ which is not a state encoded from the original space. Thus, $\Tr(\mathcal{E}(\hat O_1)e^{-\beta H_2})=\Tr(\hat O_1e^{-\beta H_1})$, and hence
$$
\frac{\langle \hat O_2\rangle}{\langle \hat O_1\rangle}=\frac{\Tr(e^{-\beta H_1})}{\Tr(e^{-\beta H_2})}.
$$
This value is typically exponentially suppressed in $N$ (this is readily proven for any $J>\max_i|h_i|$), meaning that $\langle \hat O_2\rangle$ needs to be determined to exponential accuracy in order to rescale it to evaluate $\langle \hat O_1\rangle$. Consider as an example a system with local magnetic field $H_1=B\sum_iZ_i$. Each spin is encoded into 3 qubits to make it RI, so we write a local term of $H_2$ as $(BZ)\oplus(J\identity)$ where $\identity$ is the $6\times 6$ identity matrix.
$$
\frac{\langle \hat O_2\rangle}{\langle \hat O_1\rangle}=\left(\frac{2\cosh(B\beta)}{2\cosh(B\beta)+6e^{-\beta J}}\right)^N
$$
The denominator is clearly larger than the numerator due to the additional 6 levels, and the $N^{th}$ power effectively reduces the ratio to 0 for large $N$ and constant $\beta, J$. While this does not constitute a general proof that observables cannot be determined, the fact that there must always be degeneracy present when translating $H_1$ into $H_2$ (since the quantum data cannot span different subsystems) implies that this is the case.

\begin{lemma}
If $H_1$ is constrained to a 1D geometry and is TI, there also exists an instance $H_2$ which is confined to a 1D geometry, is TI and RI, and which recovers the previous equivalences of ground state energy and dynamics.
\label{lemma:trans}
\end{lemma}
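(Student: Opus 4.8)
The plan is to reopen the construction used to prove Lemma~\ref{lemma:gs} and to observe that, once $H_1$ is translationally invariant, every step of that construction can be carried out identically at every site, so that translational invariance is inherited by $H_2$ essentially for free. Since $H_1$ is TI and $1$D, all of its local terms are translates of a single $k$-body term $h$, and $\|h\|$ is a single well-defined constant. First I would, exactly as in Lemma~\ref{lemma:gs}, fix once and for all an $r=O(1)$ possessing a decoherence-free subsystem of dimension $d_r\geq d$, inflate \emph{every} site to dimension $2^r$, add the \emph{same} penalty of strength $J=2k\|h\|$ on every site, and map the retained levels of each site into the $\mathcal{N}_j$ factor of its $r$-qubit block using one and the same local encoding $\mathcal{E}$. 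The output $H_2$ is then a sum over unit cells of a fixed block-penalty term plus a sum of translates of a single encoded interaction $h'$, so it is translationally invariant under translation by one $r$-qubit unit cell --- i.e.\ it is a bona fide TI chain once each block of $r$ qubits is regarded as one site of dimension $2^r$. On a ring the chain has length $rN$, a multiple of $r$, so this translation is an exact symmetry, and the $k$-body terms of $H_1$ become $rk=O(1)$-body terms of $O(1)$ norm, still local on the same $1$D lattice.

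Everything else would be inherited verbatim. Rotational invariance is exactly the property established for the analogous construction in Lemma~\ref{lemma:gs}: the per-block penalty can be taken to depend only on that block's total angular momentum, and the encoded interactions act as the identity on the $\mathcal{M}_j$ factors and nontrivially only on the $\mathcal{N}_j$ factors, on which a collective $SU(2)$ rotation acts trivially; since that argument never used any non-uniformity it applies unchanged here, giving $U^{\otimes rN}H_2U^{\dagger\otimes rN}=H_2$ for every $U\in SU(2)$. The ground-state energy and the gap $\Delta$ are preserved by Lemma~\ref{lemma:gs}, and, because the same uniform encoding $\mathcal{E}$ is used throughout, the dynamical equivalence of Lemma~\ref{lemma:dynamics} carries over as well. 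Thus $H_2$ obeys both constraints at once, the translational one acting on $r$-qubit unit cells and the rotational one acting on the constituent qubits.

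The point that needs the most care --- and the closest thing to a genuine obstacle --- is the reconciliation of these two notions of ``site''. The decoherence-free-subsystem encoding cannot be made invariant under translation by a \emph{single} qubit: a one-qubit shift misaligns the encoding blocks, and there is no rotationally invariant local term that can flag, and hence penalise, such a misalignment (the only obvious rotation invariant, the total angular momentum of a sliding window, fails to distinguish block-aligned from block-straddling windows). So $H_2$ is naturally a period-$r$ qubit Hamiltonian, equivalently a period-$1$ Hamiltonian on $2^r$-dimensional sites that is rotationally invariant under the collective $SU(2)$ subgroup rather than the full $SU(2^r)$ --- which is exactly the physically meaningful symmetry, and exactly the RI condition used here. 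I would argue this distinction is immaterial for the complexity statement: $r=O(1)$, so such an $H_2$ has the same single scaling parameter $N$ as any TI Hamiltonian and is interconvertible in $\text{poly}(N)$ time with one defined on single-qubit sites, so the running-time-in-$N$ question of \cite{irani} is insensitive to it. Hence all the equivalences of Lemmas~\ref{lemma:gs} and \ref{lemma:dynamics} survive the additional imposition of translational invariance, and ground-state and dynamical problems for TRI Hamiltonians are as hard as for TI Hamiltonians.
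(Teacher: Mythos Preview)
Your construction only yields period-$r$ translational invariance at the qubit level, and you explicitly acknowledge this, arguing that it is ``immaterial for the complexity statement''. But the lemma, as the paper clarifies immediately after stating it, is specifically about period-$1$ TI \emph{of qubits}: ``In this mapping, we will transform the period 1 TI of the $d$ dimensional spins to a period 1 TI of qubits \ldots\ \emph{not} a $\lceil\log_2(d)\rceil$ periodicity.'' Your fallback of viewing each $r$-qubit block as a single $2^r$-dimensional site restores period-$1$ TI but loses RI in the sense the paper uses: the RI condition is $U^{\otimes N}HU^{\dagger\otimes N}=H$ for \emph{all} single-spin unitaries $U$, which on a $2^r$-dimensional site means all of $SU(2^r)$, not just the diagonal $SU(2)$ subgroup. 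The paper flags exactly this tension earlier: ``we lose the ability to reassert nearest-neighbour interactions by recoding into a larger local Hilbert space dimension --- this would break the rotational invariance.'' So your proposal does not prove the lemma as stated.

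The substantive error is your assertion that ``there is no rotationally invariant local term that can flag, and hence penalise, such a misalignment (the only obvious rotation invariant, the total angular momentum of a sliding window, fails to distinguish block-aligned from block-straddling windows).'' The paper's proof is precisely a construction of such a flag. The trick you are missing is that the flag projector need not be the total angular momentum of a contiguous window: instead one interleaves singlet/triplet projectors on \emph{non-adjacent} pairs of qubits, e.g.\ $P^{F}=P^{2,0}_{1,3}P^{2,1}_{2,4}P^{2,0}_{5,6}P^{m,m/2}_{7\ldots6+m}$. Because the $P^{2,0}$ and $P^{2,1}$ terms sit on overlapping but offset pairs, a one-qubit shift makes $P^{2,0}_{1,3}$ land on top of $P^{2,1}_{2,4}$ and annihilate; larger shifts are killed by a $P^{2,0}$ term landing inside the fully polarised $P^{m,m/2}$ block (Eqn.~(\ref{eqn:angmom2})) or by $P^{m,m/2}$ landing on the $\mathcal{N}_j$-encoded data (Eqn.~(\ref{eqn:angmom1})). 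One then verifies $ABC=0$ for every nontrivial offset $n=1,\ldots,r+F-1$, adds a penalty $J'(\identity-P^{F}\otimes P^{r,j})$ at every qubit, and attaches the encoded interaction $h'$ only where the flags line up. The resulting Hamiltonian is genuinely invariant under single-qubit translation and under collective $SU(2)$, at the cost of enlarging the interaction range to $k'=k(F+r)$.
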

In this mapping, we will transform the period 1 TI of the $d$ dimensional spins to a period 1 TI of qubits (i.e.~the Hamiltonian acting on a set of qubits is the same for all translations along the chain) {\em not} a $\lceil\log_2(d)\rceil$ periodicity.
\begin{proof}
As in Lemma \ref{lemma:gs}, find $d_r>d$ and recode $H_1$, then apply the magnetic field. In the previous case, the Hamiltonian terms $h_{1\ldots k}$ were transcribed to terms $h_{1\ldots rk}'$. Due to the TI of $H_1$, all the terms $h$ are the same, just acting on different blocks of qubits. However, now we will introduce a flag state $\ket{F}$, and a flag detection projector $P^F$, both of $F$ qubits. These RI flags will sit in front of each logical spin to herald the start of the logical spin. The state that the Hamiltonian acts on, and the projector $P^F$, will be chosen such that $P^F$ has a non-zero outcome only if it aligns exactly with the qubits of the flag state. Introducing these states alters the Hamiltonian terms, such that the transcription takes the form
\begin{eqnarray}
&&\left(\prod_{i=0}^{k-1} T^{-i(r+F)}P^F_{1\ldots F}T^{i(r+F)}\right)   \nonumber\\
&&\otimes h_{F+1\ldots F+r,2F+r+1\ldots 2(F+r),\ldots,(k-1)(F+r)+F+1\ldots k(F+r)}'.   \nonumber
\end{eqnarray}
Fig.~\ref{fig:flags}(a) depicts a suitable flag of the form
$$
P^{F}_{1\ldots F}=P^{2,0}_{1,3}P^{2,1}_{2,4}P^{2,0}_{5,6}P^{m,m/2}_{7\ldots6+m},
$$
for some even integer $m$, to be selected momentarily. It is also necessary to introduce a further magnetic field
$$
J'\sum_{i=0}^{N(F+r)-1}T^{-i}(\identity-P^F_{1\ldots F}\otimes P^{r,j}_{F+1\ldots F+r})T^i
$$
where $J'=2k'\max_i|h_i'|$ so that states that are not of the correct form are penalised. The Hamiltonian is $k'=k(F+r)$-body acting on $(F+r)N$ qubits. With periodic boundary conditions, the eigenstates of $H_2$ are $(3(m+1))^{N}(F+r)$ times more degenerate than $H_1$, corresponding to the different possible alignments of the ground state, and the different choices of state detected by $P^{2,1}$, of which there are $3$ per flag state, and the $m+1$ different choices for the $P^{m,m/2}$ term. It remains to prove that $C=T^{-n}P^{F}_{1\ldots F}T^n$, a term which will appear in the Hamiltonian, has no support on $AB$, the flag states from two neighboring encoded spins in the (ground) state of the system. These flag states constitute any of the $\text{rank}(P^F)$ orthogonal states that satisfy $P^F\ket{F}=\ket{F}$. By taking  $A=P^{F}_{1\ldots F}$ and $B=T^{-(F+r)}P^{F}_{1\ldots F}T^{F+r}$, we require $ABC=0$ for $n=1\ldots r+F-1$. Consider, first, the case of $n=1$. Here the $P^{2,0}_{1,3}$ term in C overlaps with $P^{2,1}_{2,4}$ in A, and thus $ABC=P^{2,0}_{2,4}P^{2,1}_{2,4}=0$. Similarly, the $P^{2,0}_{5,6}$ term in C hits $P^{m,m/2}$ in A for $n=2\ldots m$, and the $P^{2,0}_{2,4}$ also hits over the range $n=6\ldots m+3$, resulting in 0 overlap, Eqn.~(\ref{eqn:angmom2}). There is a symmetry between the overlaps of C and A at offset $n$ and those of B and C at offset $r+F-n$. By imposing that $m=\max(r-1,5)$, these offsets cover the entire range.

\begin{figure}[!t]
\begin{center}
\includegraphics[width=0.45\textwidth]{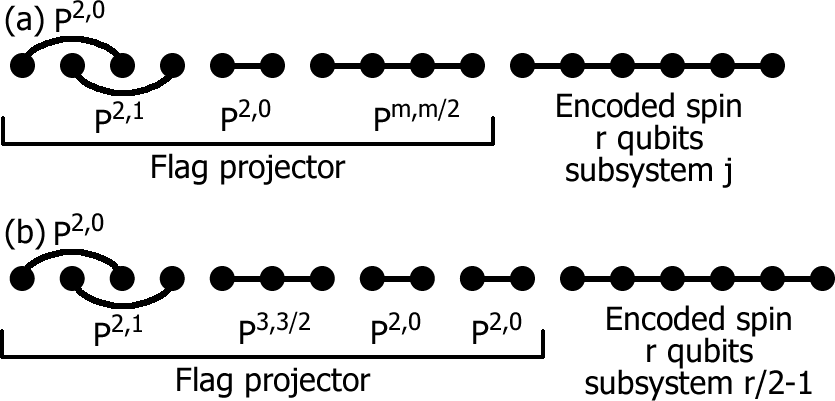}
\end{center}
\vspace{-0.5cm}
\caption{The flag projector $P^F$. There is a duality between the projectors in the Hamiltonian and the ground state of the Hamiltonian -- one selects any of the states that the flag projector projects onto. (a) A general construction that works for any values of $r$ and $j$ for sufficiently large $m$. (b) Special case construction to give $F\sim O(1)$ for $j\approx r/2$.} \label{fig:flags}
\vspace{-0.5cm}
\end{figure}
Smaller values of $m$ are often possible depending on the $\mathcal{N}_j$ subsystem that we encode into. For example, if $r/2+j+1\leq m\leq r$, then for offsets $n=m$ to $r$, the $P^{m,m/2}$ in C entirely overlaps the encoded spin of $A$, which is in the $j$ subspace, and the projection is 0, Eqn.~(\ref{eqn:angmom1}). When taken in conjunction with the symmetric version on B and C, then all offsets $n$ give 0 projection provided $r/2\geq j+6$, and we select $m=r/2+j+1$. In the limit of large $r$, with $j=\sqrt{r}/2$, the most efficient encoding into a decoherence-free subsystem, one finds that the cost of a TRI Hamiltonian scheme compared to a TI scheme is an increase in the range of the Hamiltonian terms of approximately $50\%$. For the smallest possible values of $r=3,4$, $P^F$ can be altered, replacing $P^{2,0}_{1,3}P^{2,1}_{2,4}$ with $P^{2,1}_{1,3}P^{2,0}_{2,4}$ and setting $m=r+1$, which yields a minor improvement. For large values of $j$ (of order $r/2$), one can find superior choices for the flag state, of constant size. However, there is little interest in elucidating this construction, and is relegated to the example in Fig.~\ref{fig:flags}(b).

It is immediate from the preceding arguments that there exists an encoding procedure $\mathcal{E}$ to satisfy the TI variants of Lemmas \ref{lemma:gs} and \ref{lemma:dynamics}.
\end{proof}

Upon application of this result to the Hamiltonian of \cite{followup1}, which required TI nearest-neighbor interactions of 10-level systems to implement universal quantum computation through dynamics, the result is a 38-body TRI local 1D Hamiltonian ($r=7,j=\half,m=6$). This is compared to the 106-body interactions required in \cite{Kay:08}.

{\em Conclusions:} Imposing RI on a Hamiltonian does not affect the difficulty of finding the ground state energy -- for any Hamiltonian, one can always write down a RI version that accurately reproduces all salient properties, including, in particular, TI. Hence it is not expected that TRI Hamiltonians which are local on a 1D lattice can be efficiently solved \cite{irani}. The same can be said for the $N$-representability problem as applied to a $k'$-qubit density matrix $\rho$ \cite{liu-2007-98,irani}, which should be compatible with a TRI pure state $\ket{\psi}$ of $N$ qubits. The results of \cite{aharonov} can also be applied in order to discuss the universality of Hamiltonians for Adiabatic Quantum Computation. Unfortunately, none of the results apply to the calculation of expectation values of thermal states at any finite temperature. It remains an interesting open question whether alternative formulations can prove the persistence of the complexity of calculation for finite temperatures.

{\em Acknowledgments:} N.~Schuch is thanked for a critical reading of the manuscript. This work is supported by the DFG Cluster of Excellence Munich-Centre for Advanced Photonics (MAP)
 and Clare College, Cambridge.

\end{document}